
\documentclass[letterpaper, 10pt, conference]{ieeeconf}      

\IEEEoverridecommandlockouts                              
\overrideIEEEmargins


\usepackage{amsmath,graphicx,amsfonts,amssymb,epsfig,subfigure,mathrsfs,mathtools}
\usepackage{arydshln}
\usepackage{blkarray}

\usepackage{color}
\usepackage{multirow}
\usepackage{multicol}
\usepackage{lipsum}
\usepackage{rotating}
\usepackage{graphicx}%
\usepackage{algorithm}
\usepackage{algpseudocode}
\usepackage{cite}
\usepackage{tikz}

\newtheorem{theorem}{Theorem}[section]

\newtheorem{definition}{Definition}[section]
\newtheorem{lemma}{Lemma}[section]
\newtheorem{proposition}{Proposition}[section]
\newtheorem{remark}{Remark}[section]

\title{\LARGE \bf
Convergence Analysis of Asynchronous Consensus in Discrete-time Multi-agent Systems with Fixed Topology
}


\author{Kooktae Lee and Raktim Bhattacharya
\thanks{Kooktae Lee and Raktim Bhattacharya are with the Department of Aerospace Engineering, Texas A\&M University,
        College Station, TX 77843-3141, USA, {\tt\scriptsize \{animodor,raktim\}@tamu.edu.}}%
}

\begin{document}
\maketitle
\thispagestyle{empty}
\pagestyle{empty}

\begin{abstract}
In this paper, we study a convergence condition for asynchronous consensus problems in multi-agent systems. 
The convergence in this context implies the asynchronous consensus value converges to the synchronous one and thus is unique. Although it is reported in the literature that the consensus value under asynchronous communications may not coincide with the synchronous consensus value, it has not received much attention. In some applications, the discrepancy between them may result in serious consequences. For such applications it is critical to determine under what conditions the asynchronous consensus value is the same as the synchronous consensus value. We illustrate these issues with a few examples and then provide a condition, which guarantees that the asynchronous consensus value converges to the synchronous one. The validity of the proposed result is verified with simulations.
\end{abstract}

\section{Introduction}
Recently, consensus problems for multi-agent systems have captured attentions from many researchers due to wide range of  applications. In general, a consensus problem is to seek state values that are identical across multiple agents, achieved via communication between agents. The consensus of multi-agent systems, for example, can be adopted to formation control \cite{porfiri2007tracking}, \cite{listmann2009consensus}, flocking control \cite{blondel2005convergence}, \cite{yu2010distributed}, distributed averaging \cite{xiao2004fast}, \cite{nedic2009distributed}, \cite{mehyar2005distributed}, \cite{olshevsky2009convergence}, cooperative control \cite{ren2007consensus}, \cite{ren2008distributed}, synchronization of coupled oscillators \cite{papachristodoulou2010effects}, distributed sensor fusion in the sensor network \cite{xiao2005scheme}, \cite{olfati2005consensus}, \cite{kar2009distributed}, and consensus optimization \cite{zhang2014asynchronous}.

Throughout many research works, it is known that for a fixed topology, the consensus is reachable if and only if the directed graph has a spanning tree \cite{ren2004consensus}. 
This condition ensures that a multi-agent system can reach a consensus by exchanging information between agents among which a connection link exists. One of the big assumptions placed in here is that state updates across the network are synchronized at every time steps, which naturally includes two conditions -- no communication uncertainties in the network such as packet drops or communication delays;  the existence of a global clock that synchronizes time for all agents.
From a practical point of view, however, these conditions are too restrictive or hard to implement. 

Therefore, an asynchronous model \cite{bertsekas1989parallel}, \cite{szyld1998mystery}, \cite{mehyar2005distributed}, \cite{fang2005information}, \cite{gao2013asynchronous}, \cite{zhang2014asynchronous} has arisen as an alternative approach. In this framework, the clock synchronization over multiple agents is unnecessary and communication uncertainties can be also taken into account. 
While many researchers have focused on developing certain conditions under which the asynchronous consensus is reachable, it has been reported in the literature \cite{fang2005information}, \cite{xiao2008asynchronous} that the asynchronous consensus value may not be the same as the synchronous one. These results confirm that the consensus value with asynchronous communication, depends not only on the interaction topology, the initial condition, but also on the randomness in the communication. This leads to different consensus values for every simulation, even when the first two factors are exactly the same. This discrepancy between synchronous and asynchronous consensus values may not be a critical issue in some applications as long as the state of all agents reaches a certain consensus. However, for some other applications such as distributed averaging, parallel fixed-point iteration, consensus optimization, or distributed sensor fusion problems, the asynchronous consensus may converge to an incorrect value and may cause serious consequences. For such systems, it is critical to determine conditions under which consensus values between the synchronous and asynchronous models are identical.

The major contribution of this paper is thus on establishing such conditions to guarantee that the asynchronous consensus value converges into the synchronous case. We apply the switched system framework, the graph theory, and the nonnegative matrix theory to arrive at this condition. Using these tools, we show that the asynchronous consensus value coincides with the synchronous value \textit{irrespective of the randomness in the asynchronous communication}, if there exists \textit{at least} one leader in a multi-agent system. We support the validity of the proposed results with simulations.

Rest of this paper is organized as follows. In section II, we briefly introduce preliminaries with problem descriptions. Section III presents the main results of this paper, which is the convergence condition for asynchronous consensus problems. 
In section IV, we conclude the paper with future works.

\section{Preliminaries and Problem Descriptions}
\noindent\textbf{Notation:} The set of real and natural numbers are denoted by $\mathbb{R}$ and $\mathbb{N}$. Moreover, $\mathbb{N}_0 := \mathbb{N}\cup\{0\}$. The symbols $\text{\underline{I}}^{n\times n}$ and $\text{\underline{0}}^{n\times n}$ stand for the identity and the zero matrix with ${n\times n}$ dimension. Furthermore, the symbol $\rho(\cdot)$ represents the spectral radius of the square matrix.

\subsection{Graph Theory \& Consensus Protocol}
Let $\mathcal{G}=\{\mathcal{V},\mathcal{E},\mathcal{A}\}$ be a directed graph, where $\mathcal{V}=\{\textit{v}_1,\textit{v}_2,\hdots,\textit{v}_n\}$ is the set of $n$ numbers of nodes, $\mathcal{E}\subseteq \mathcal{V}\times \mathcal{V}$ is the set of edges, and $\mathcal{A}\in\mathbb{R}^{n\times n}$ is a weighted adjacency matrix with nonnegative elements $a_{ij}$. A directed edge $e_{ij}\in \mathcal{E}$ denotes the information flow from $\textit{v}_i$ to $\textit{v}_j$.

In the graph theory, a spanning tree is a tree that connects all nodes with edges. A directed graph that has a spanning tree guarantees the reachability of the consensus under the synchronism. We extend the notion of a spanning tree by introducing an $m$-rooted spanning tree, which plays an important role in analyzing the convergence of the asynchronous consensus value.
\begin{definition}[An $m$-rooted spanning tree]
A tree is said to be an $m$-rooted spanning tree if $m$ numbers of roots exist in the tree and all nodes are connected with edges.
\end{definition}
Illustrative examples for an $m$-rooted (directed) spanning tree are given in Fig. \ref{fig: m rooted spanning tree}.
In general, it is known that a (right) stochastic matrix associated with the graph that
has a spanning tree possesses a unique spectral radius unity. Concordantly, a stochastic matrix for an m-rooted spanning
tree contains m numbers of spectral radius unity.

\begin{figure}
\centering
\subfigure[$2$-roots spanning tree]{
\includegraphics[scale=.28]{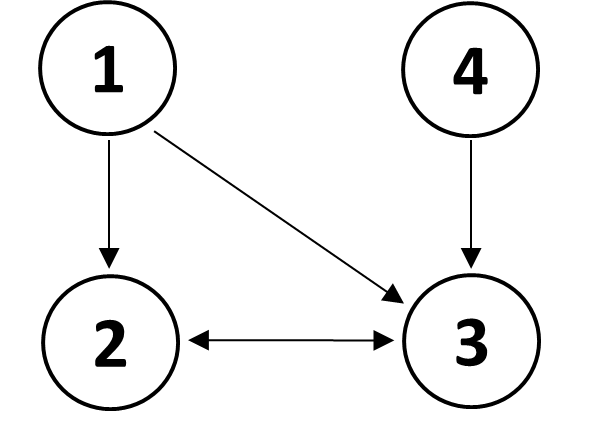}
}\qquad\quad
\subfigure[$3$-roots spanning tree]{
\includegraphics[scale=.28]{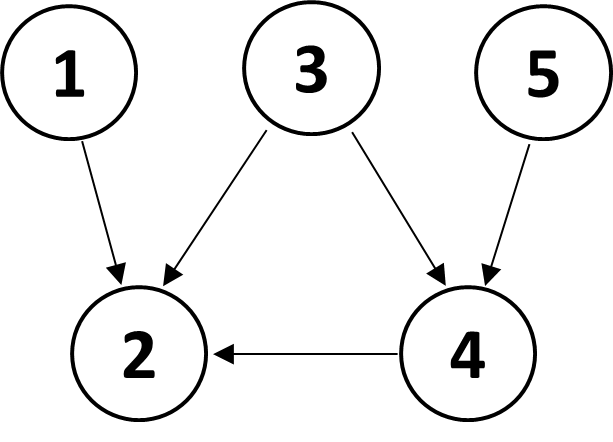}
}
\caption{Examples of $m$-rooted (directed) spanning tree. The roots are given by (a) the node $1$ and $4$; (b) the node $1$, $3$, and $5$}\label{fig: m rooted spanning tree}
\end{figure}

According to \cite{ren2004consensus}, \cite{fang2005information}, for the given interaction topology the discrete-time consensus protocol is represented by
\begin{align}
x_i(k+1) = \dfrac{1}{\sum_{j=1}^{n}a_{ij}(k)}\sum_{j=1}^{n}a_{ij}(k)x_j(k),\label{eqn:consensus protocol}
\end{align}
where $k\in\mathbb{N}_0$ is a discrete-time index, $x_i$ is the state for the $i^{\text{th}}$ agent, and $n$ is a total number of agents. 

Since we consider a directed graph, it may lead to $a_{ij}(k)\neq a_{ji}(k)$, $\forall i, j$.
The network topology is not all-to-all connection and thus, under this interaction topology, information exchange between agents is necessary to reach the consensus $x_1(\bar{k}) = x_2(\bar{k}) = \hdots = x_n(\bar{k})$ for some time $\bar{k}$.

\subsection{Synchronous Consensus Model}

In the absence of communication uncertainties for the networks in which a global clock exists, all state values are synchronized at each discrete time. Then, from \eqref{eqn:consensus protocol} the synchronous consensus protocol can be written by the following compact form.

 $\bullet$ Synchronous model:
\begin{align}
x_{\text{sync.}}(k+1) = F(k)x_{\text{sync.}}(k),\label{eqn:sync}
\end{align}
where $x_{\text{sync.}}=[x_{\text{sync.},1},x_{\text{sync.},2},\hdots, x_{\text{sync.},n}]^{T}\in\mathbb{R}^{n}$, $x_{\text{sync.},i}\in\mathbb{R}$ denotes the state for $i^{\text{th}}$ agent in the synchronous case. The structure of the matrix $F(k)\in\mathbb{R}^{n\times n}$ is given by
\vspace{-0.05in}

\small{
\begin{align*}
F(k):= \begin{bmatrix}
f_{11}(k) & f_{12}(k) & \hdots & f_{1n}(k)\\
f_{21}(k) & f_{22}(k) & \hdots & f_{2n}(k)\\
\vdots & \vdots & \ddots & \vdots\\
f_{n1}(k) & f_{n2}(k) & \hdots & f_{nn}(k)\\
\end{bmatrix}
\end{align*}
}\normalsize
with $f_{ij}(k):=\frac{a_{ij}(k)}{\sum_{j=1}^{n}a_{ij}(k)}$, and each row sum of $F(k)$ is unity. This particular type of the matrix $F(k)$ is also known as the \textit{(right) stochastic matrix (or Markov matrix)}. 

In the synchronous case, the consensus in a multi-agent system is reachable if and only if a directed graph has a spanning tree. Under this condition, the stochastic matrix $F$ has the stationary form, denoted by $F^{\star}:=\lim_{k\rightarrow \infty}F^k = \mathbf{1}\mu^{T}$, where $\mathbf{1}$ is a column vector with all elements are $1$ and $\mu:=[\mu_1,\mu_2,\hdots,\mu_{n}]^{T}$ with the sum of all elements being $1$.

In practice, however, the synchronous model is not appropriate to be implemented because of communication uncertainties. Moreover, a global clock that synchronizes time for all agents may not exist, or may be hard to achieve.  For these reasons, many researchers have investigated an asynchronous model (in \cite{mehyar2005distributed}, \cite{szyld1998mystery,fang2005information,gao2013asynchronous,xiao2008asynchronous}, \cite{zhang2014asynchronous}) that is more desirable and practical than the synchronous one. 

\subsection{Asynchronous Consensus Model}
The dynamics of the asynchronous model in multi-agent systems can be written as follows.

 $\bullet$ Asynchronous model:
\begin{align}
x_{\text{async.},i}(k+1) = f_{ii}&(k)x_{\text{async.},i}(k)\nonumber\\
& + \sum_{j\in\mathcal{N}_i}f_{ij}(k)x_{\text{async.},j}(k_j^*),\label{eqn:each state async}
\end{align}
where $x_{\text{async.},i}\in\mathbb{R}$ denotes the state for the $i^{\text{th}}$ agent in the asynchronous scheme and $\mathcal{N}_i$ is the set of neighbors having connection links with the agent $i$. The sources of asynchrony are represented by the time-delay term $k_j^*\in\mathbb{N}_0$, which is finite and bounded by $\tau_d$, i.e., $k_j^*\in\{k,k-1,\hdots,k-\tau_d\}$. 

In \eqref{eqn:each state async}, the discrete-time index $k$ can be different from each agent. Thus, the global clock is unnecessary. We assume that no asynchrony takes place from the agent itself due to the fact that no communication is required for its own value. Hence, sources for asynchony only come from communications with other agents, which is described by $k_j^*$. In this case, $k_j^*$ becomes a random variable that may affect the consensus value.  

This paper only considers the case where the topology is fixed, i.e., $F(k)=F$, $\forall k\in\mathbb{N}_0$.
Throughout broad research works, the reachability of the consensus under the asynchronism with fixed topology is widely investigated. For example, in \cite{fang2005information} it is shown that the asynchronous consensus is attainable if 
\begin{align}
\rho\left( \,|F - F^{\star}|\,\right) < 1,\label{eqn:stability of async.}
\end{align}
where the symbol $|\cdot|$ denotes element-wise absolute values for the given matrix.

Notice that \eqref{eqn:stability of async.} is a sufficient condition for the reachability of the asynchronous consensus and does not guarantee a unique consensus under asynchronous communications. As pointed out in \cite{fang2005information}, the necessity for the uniqueness of the consensus value is given by $\rho(F) = 1$. Thus, the asynchronous consensus value could be different from the synchronous one. We illustrate this issue with the following example.

\noindent Example $1$: Consider a consensus problem with a fixed interaction topology, where an associated matrix $F$ is given by
\small{
\begin{align*}
F = \begin{bmatrix}
0.5 & 0.5 & 0 & 0 & 0\\
0.4 & 0.3 & 0 & 0 & 0.3\\
0.1 & 0.2 & 0.2 & 0.4 & 0.1\\
0 & 0 & 0 & 0.7 & 0.3\\
0.1 & 0.5 & 0.1 & 0.2 & 0.1
\end{bmatrix}.
\end{align*}
}\normalsize
An initial state condition is set as $x_{\text{sync.}}(0) = x_{\text{async.}}(0) = [3, 2, 1, 3, 5]^{T}$.


The consensus is reachable for both synchronous and asynchronous cases, since the interaction topology has a spanning tree and $\rho\big( |F - \mathbf{1}\mu^{T}|\big) = 0.83 < 1$, where $\mu=[0.32,\,    0.35,\,    0.02,\,    0.14,\,    0.17]^{T}$. In the case of the asynchronous model, Monte Carlo simulation was carried out with a maximum delay $\tau_d=5$ and a uniform distribution for $k_j^*\in\{k,k-1,\hdots,k-5\}, \, j=1,2,\hdots,5$, to describe the random behavior of asynchrony. In Fig. \ref{fig:2},
the trajectories of $2$-norm value of the state vector is depicted for both synchronous and asynchronous models. As shown in Fig. \ref{fig:2}, asynchronous consensus values are sensitive to the randomness of the delay and do not match the synchronized value. In fact, the asynchronous consensus value is a \textit{random variable} that depends on the randomness of the communication channel.
This example clearly demonstrates that the asynchronous consensus value is determined by initial condition, the interaction topology, and the  randomness of communication, whereas the synchronous consensus relies only on the first two factors. 

This discrepancy is unacceptable in applications such as distributed averaging, parallel fixed-point iteration, distributed sensor fusion, or average consensus in multi-agent formation or flocking control problems. To motivate the problem further, we consider these examples in more detail and highlight why the discrepancy in consensuses is an issue.

\begin{figure}
\centering
\includegraphics[scale=0.45]{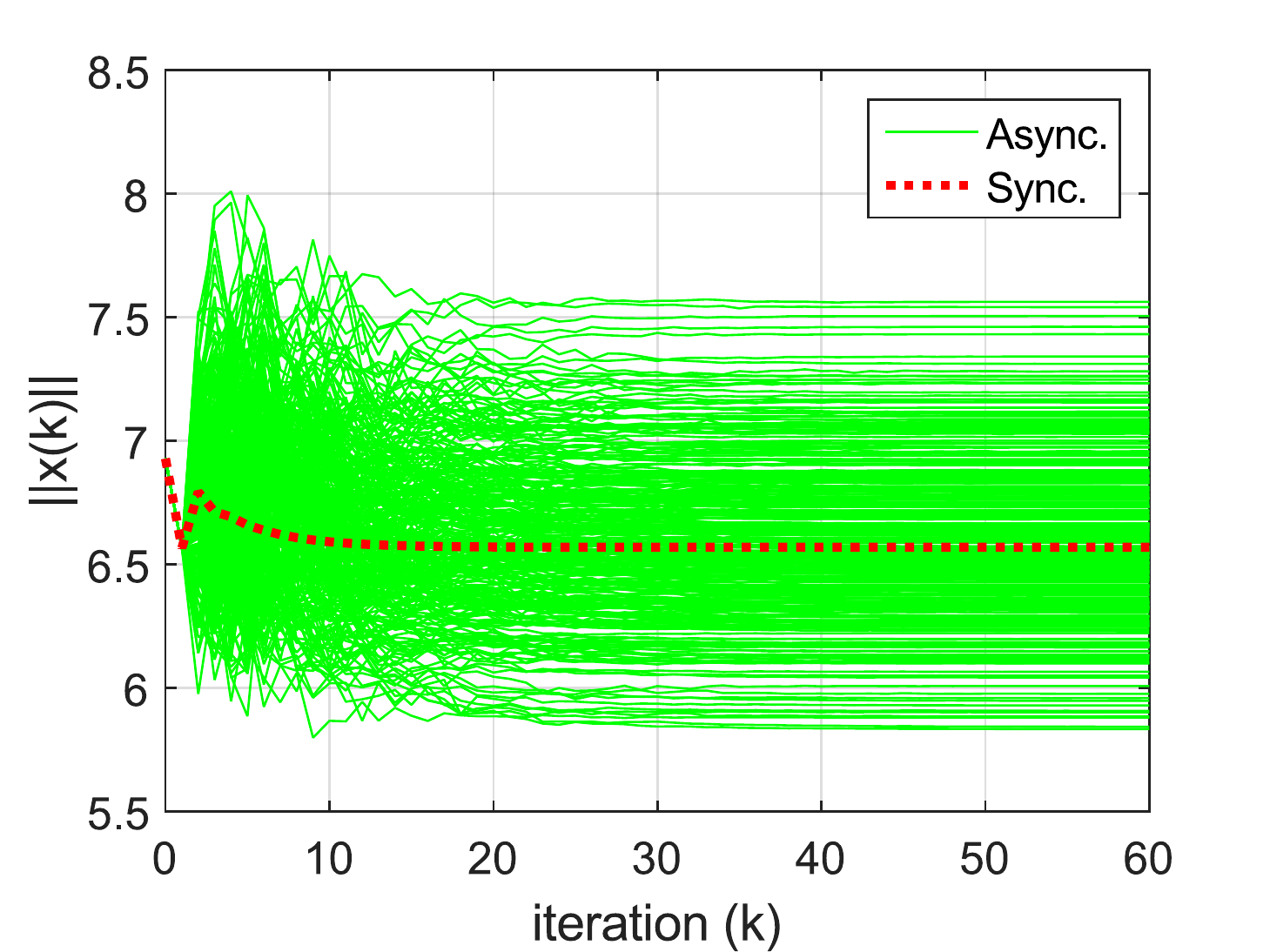}
\caption{The trajectory comparison between the asynchronous model with $300$ samples (solid lines) and the synchronous model (dotted line) for Example $1$}\label{fig:2}
\end{figure}

\subsection{Consensus Critical Applications}
\textbf{1) Distributed averaging over communication networks} -- Suppose that there are $n$ numbers of agents in the network with fixed topology. Initially, each agent has a different value, and the objective is to find out the global average value \cite{xiao2004fast}, \cite{nedic2009distributed}, \cite{mehyar2005distributed}, \cite{olshevsky2009convergence}. In the absence of communication uncertainties such as packet drops or communication delays in the network, the average can be obtained by synchronized information exchange between agents. In practice, however, such communication uncertainties may take place, leading to asynchronous updates. Consequently, this results in the incorrect average.

\textbf{2) Parallel computing for fixed-point iteration} --  
Parallel computing is widely used technique to speedup computation of fixed-point iterations. For example, in \cite{lee2015switched} and \cite{lee2015asynchronous}, one-dimensional heat equation using a finite difference method is solved by parallel computing in which a certain group of grid points for the finite difference scheme is assigned to each CPU or GPU core. Thus, in parallel computing each value for the group of grid points is computed by different cores, followed by communication between cores for updating values at the boundary grid points of the group. Starting from some initial condition, the spatio-temporal evolution of temperature will converge to steady-state temperature (i.e. a consensus value) as iterations progress. In this case, asynchronous updates can improve the computing performance by avoiding synchronization bottleneck problem, but consequently leads to different consensus value (i.e. solution to the heat equation)\cite{lee2015asynchronous}. Thus, we get an erroneous solution to the problem in the asynchronous updates. This issue will also appear in other fixed-point iterations such as numerical method, optimization, signal processing, etc.

\textbf{3) Distributed sensor fusion in the sensor network} -- Distributed sensor fusion is one of the  applications of consensus problems \cite{xiao2005scheme}, \cite{olfati2005consensus}, \cite{kar2009distributed}. For instance, distributed sensors can be used to track a target that is of interest. In this case, each sensor can exchange information about the target through the network with a given interaction topology that describes the weight between distributed sensors. The asynchronous consensus value may not converge into the synchronous one, if the information from distributed sensors is updated asynchronously. Thus, the target information will be erroneous.

%
\textbf{4) Average consensus in multi-agent formation or flocking control} -- Consensus problems have been broadly adopted to formation or flocking control in multi-agent systems \cite{porfiri2007tracking,listmann2009consensus,yu2010distributed,blondel2005convergence}, \cite{ren2004consensus}. In this case, an alignment of heading or a velocity consensus is of concern. On top of the consensus itself, consider the case that an average consensus is required (e.g., average velocity or average heading). Although the consensus is reachable by communications between agents, the asynchronous scheme may deliver an inexact average consensus value as reported in Example $1$.

Therefore, for some applications, we have to guarantee that the asynchronous consensus value is unique and is the same as the synchronized case. This paper primarily focuses on developing such conditions and the main results are introduced in the following section.


\section{Convergence Analysis}
We first convey the concept of the switched system, which is adopted to analyze the asynchronous consensus value.
\subsection{Switched System}
For simplicity, suppose that there exists a two-agent system of which state is given by $x(k) = [x_1(k),\,x_2(k)]^{T}$. The stochastic matrix $F$ associated with a given interaction topology for this system is written as
\begin{align*}
F = \begin{bmatrix}
f_{11} & f_{12}\\
f_{21} & f_{22}
\end{bmatrix}.
\end{align*}

If we consider a concatenated state vector $\tilde{x}(k) := [x(k)^{T}, x(k-1)^{T}, \hdots, x(k-\tau_d)^{T}]^{T}$, then the dynamics of $\tilde{x}$ is updated by the following form:
\begin{align}
\tilde{x}(k+1) &= W_{\sigma_k}\tilde{x}(k),\label{eqn:switched system}
\end{align}
where $W_{\sigma_k}$ denotes a modal matrix with a switching mode $\sigma_{k}$  at time $k$.
When the maximum delay is given by $\tau_d=1$, the modal matrix $W_{\sigma_k}$ at any time $k$ becomes one of the following form

{\small
\begin{align}
&W_1 = \left[ \begin{array}{c:c}
f_{11} \,\, f_{12} & 0 \quad 0\\
f_{21} \,\, f_{22} & 0 \quad 0\\
\hdashline
& \\ 
  \text{\Large{\underline{I}}}^{\normalsize n\times n} & \text{\Large{\underline{0}}}^{n\times n} \\ 
&\\
\end{array} \right],\:
W_2 = \left[ \begin{array}{c:c}
f_{11} \quad 0 & 0 \,\, f_{12}\\
f_{21} \,\, f_{22} & 0 \quad 0\\
\hdashline
& \\ 
  \text{\Large{\underline{I}}}^{\normalsize n\times n} & \text{\Large{\underline{0}}}^{n\times n} \\ 
&\\
\end{array} \right],\nonumber\\
&W_3 = \left[ \begin{array}{c:c}
f_{11} \,\, f_{12} & 0 \quad 0\\
0 \quad f_{22} & f_{21} \, 0\\
\hdashline
& \\ 
  \text{\Large{\underline{I}}}^{\normalsize n\times n} & \text{\Large{\underline{0}}}^{n\times n} \\ 
&\\
\end{array} \right],\:
W_4 = \left[ \begin{array}{c:c}
f_{11} \,\, 0 & 0 \,\, f_{12}\\
0 \,\,\, f_{22} & f_{21} \,\, 0\\
\hdashline
& \\ 
  \text{\Large{\underline{I}}}^{\normalsize n\times n} & \text{\Large{\underline{0}}}^{n\times n} \\ 
&\\
\end{array} \right].\label{eqn:W_j}
\end{align}
}

Notice that above modal matrices $W_j$, $j=1,2,3,4$, are obtained by taking into account every possible scenarios for delays.
Based on the assumption that asynchrony sources only come from communications with connected agents, the diagonal elements $f_{ii}$ for $F$ is always fixed in all $W_j$, whereas off-diagonal elements for $F$ move accordingly in $W_j$. The stochastic property of asynchrony is then represented by a certain switching rule (e.g, Markovian switching) that governs the switching process $\{\sigma_k\}$ for the given  switching mode $\sigma_k\in\{1,2,3,4\}$. The switched system with a stochastic switching process is particularly referred to as the stochastic switched system or stochastic jump linear system \cite{lee2015performance}. 
Although the above example is derived when the number of agents are two with the maximum delay $\tau_d=1$, the most general case ($n$ numbers of agents with the maximum delay $\tau_d\in\mathbb{N}$) is induced in this context
with the following modal matrix structure:

\vspace{0.1in}
\noindent$W_{\sigma_k} \in\mathbb{R}^{n(\tau_d+1)\times n(\tau_d+1)}$
\small{
\begin{align}
&=\begin{bmatrix}
W_{11}(k) & W_{12}(k) & W_{13}(k) & \cdots & W_{1(\tau_d+1)}(k)\\
\text{{\underline{I}}}^{n\times n} & \text{{\underline{0}}}^{n\times n}& \text{{\underline{0}}}^{n\times n} & \cdots &\text{{\underline{0}}}^{n\times n}\\ 
\text{{\underline{0}}}^{n\times n} & \text{{\underline{I}}}^{n\times n}& \text{{\underline{0}}}^{n\times n} & \cdots &\text{{\underline{0}}}^{n\times n}  \\
\vdots & \text{{\underline{0}}}^{n\times n} &\ddots & \ddots& \vdots\\
\text{{\underline{0}}}^{n\times n} & \text{{\underline{0}}}^{n\times n} & \ddots & \text{{\underline{I}}}^{n\times n} & \text{{\underline{0}}}^{n\times n}
\end{bmatrix},&\hfill\label{eqn:W_j general}
\end{align}
}\normalsize
where $W_{1r}(k)\in\mathbb{R}^{n\times n}$ in $W_j$ is a block matrix satisfying $\sum_{r=1}^{\tau_d+1}W_{1r}(k) = F$ and the diagonal elements in $W_{11}(k)$ is identical with that in $F$ for all $k\in\mathbb{N}_0$.

The advantage of implementing the switched system framework is that one can formulate the inherent dynamics of the asynchronous model   
in this framework. Thus, we can analyze the convergence of asynchronous consensus through the switched system with \textit{arbitrary switching process}. Starting from a given initial condition $\tilde{x}(0)$, the dynamics of the switched system \eqref{eqn:switched system} can be also written as
\begin{align}
\tilde{x}(k+1) =  \tilde{W}(k)\tilde{x}(0),\label{eqn:W_tilde}
\end{align}
where $\tilde{W}(k) := W_{\sigma_k}W_{\sigma_{k-1}}\cdots W_{\sigma_{1}}W_{\sigma_{0}}$ and $\{\sigma_r\}_{r=0}^{k}$ denotes the random switching process.
Interestingly, all $W_{\sigma_k}$ matrices still form the stochastic matrix, since each element of $W_{\sigma_k}$ is nonnegative and row sum is always unity as described in \eqref{eqn:W_j}.

\subsection{Nonnegative Matrix Property}
For the given stochastic matrix $W_{\sigma_k}$, the product of $W_{\sigma_k}$ in time, denoted by $\tilde{W}(k)$, also construct the stochastic matrix by the following lemma.
\begin{lemma}\label{lemma:prod of stochastic mat.}
The product of any stochastic matrices also forms the stochastic matrix.
\end{lemma}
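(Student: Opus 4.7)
The plan is to verify the two defining properties of a (right) stochastic matrix, namely elementwise nonnegativity and unit row sums, for the product, and then extend to arbitrarily many factors by induction. Recalling the characterization already used in the paper, a matrix $A\in\mathbb{R}^{n\times n}$ is (right) stochastic iff $A\ge 0$ elementwise and $A\mathbf{1}=\mathbf{1}$, where $\mathbf{1}$ is the all-ones column vector.

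First, I would take two arbitrary stochastic matrices $A$ and $B$ of compatible dimension and check the base case. Nonnegativity of $AB$ is immediate: since every entry $(AB)_{ij}=\sum_{k}A_{ik}B_{kj}$ is a sum of products of nonnegative numbers, it is nonnegative. Unit row sums follow from
\begin{align*}
(AB)\mathbf{1} \;=\; A(B\mathbf{1}) \;=\; A\mathbf{1} \;=\; \mathbf{1},
\end{align*}
using first $B\mathbf{1}=\mathbf{1}$ and then $A\mathbf{1}=\mathbf{1}$. Hence $AB$ is stochastic.

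Next, I would upgrade this to arbitrary finite products by induction on the number of factors. Let $\{W_{\sigma_0}, W_{\sigma_1},\dots,W_{\sigma_k}\}$ be any sequence of stochastic matrices and define $\tilde{W}(k):=W_{\sigma_k}W_{\sigma_{k-1}}\cdots W_{\sigma_0}$. The case $k=0$ is trivial; assuming $\tilde{W}(k-1)$ is stochastic, the two-factor argument applied to $W_{\sigma_k}$ and $\tilde{W}(k-1)$ shows that $\tilde{W}(k)=W_{\sigma_k}\tilde{W}(k-1)$ is stochastic, closing the induction.

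I do not anticipate a real obstacle here; the proof is essentially a one-line computation together with a routine induction. The only item worth stating explicitly is the characterization $A\mathbf{1}=\mathbf{1}$, which converts the row-sum condition into a clean algebraic identity and makes associativity of matrix multiplication do all the work.
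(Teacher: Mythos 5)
Your proof is correct and takes essentially the same route as the paper: verify nonnegativity and unit row sums for a two-factor product, the only cosmetic difference being that you phrase the row-sum computation as $A(B\mathbf{1})=\mathbf{1}$ while the paper swaps the order of two explicit summations, and you add the (routine) induction to arbitrary finite products explicitly. No gaps.
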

\begin{proof}
We consider any stochastic matrices $B\in\mathbb{R}^{n\times n}$ and $C\in\mathbb{R}^{n\times n}$ given as follows:
\vspace{-0.12in}

%
%
\small{
\begin{align}
&B=\begin{bmatrix}
b_{11} & b_{12} & \hdots & b_{1n}\\
b_{21} & b_{22} & \hdots & b_{2n}\\
\vdots &\vdots & \ddots & \vdots \\
b_{n1} & b_{n2} & \hdots & b_{nn}
\end{bmatrix}, \,\,
C=\begin{bmatrix}
c_{11} & c_{12} & \hdots & c_{1n}\\
c_{21} & c_{22} & \hdots & c_{2n}\\
\vdots &\vdots & \ddots & \vdots \\
c_{n1} & c_{n2} & \hdots & c_{nn}
\end{bmatrix},\label{eqn:matrix A and B}
\end{align}
}\normalsize\vspace{-0.1in}
\begin{align*}
\text{where }\qquad
&0 \leq b_{ij} \leq 1 \,\,\text{ and }\,\, 0 \leq c_{ij} \leq 1,\,\, \forall i,j,\\
&\sum_{j=1}^{n}b_{ij} = 1 \text{ and } \sum_{j=1}^{n}c_{ij} = 1,\,\, \forall i.\qquad\qquad
\end{align*}

If we define a new matrix $D := BC$ of which element is given by $d_{ij}$, then it satisfies
\begin{align*}
0 \leq d_{ij} &= \sum_{r=1}^{n}b_{ir}c_{rj}\leq \sum_{r=1}^{n}b_{ir}\cdot 1 = 1.
\end{align*}
Thus, we have $0\leq d_{ij} \leq 1$, $\forall i,j$.

Moreover, the row sum of $D$ is given by
\begin{align*}
\sum_{j=1}^{n}d_{ij} &= \sum_{j=1}^{n}\sum_{r=1}^{n}b_{ir}c_{rj}= \sum_{r=1}^{n}b_{ir}\sum_{j=1}^{n}c_{rj}=1.
\end{align*}
As a consequence, the matrix for the product of two stochastic matrices has nonnegative elements with row sum unity, which is the stochastic matrix.
\end{proof}

According to Lemma \ref{lemma:prod of stochastic mat.}, $\tilde{W}(k)$ becomes the stochastic matrix at any time $k$.
Furthermore, the following proposition is developed to denote the particular structure of $\tilde{W}(k)$.

\begin{proposition}\label{prop:W_tilde_star}
For the matrix $\tilde{W}(k)$, defined in \eqref{eqn:W_tilde}, 
consider the structure of $\tilde{W}(k)\in\mathbb{R}^{n(\tau_d+1)\times n(\tau_d+1)}$ given by
\footnotesize{
\begin{align*}
\tilde{W}(k) = \begin{bmatrix}
\tilde{W}_{11}(k) & \tilde{W}_{12}(k) & \hdots & \tilde{W}_{1(\tau_d+1)}(k)\\
\tilde{W}_{21}(k) & \tilde{W}_{22}(k) & \hdots & \tilde{W}_{2(\tau_d+1)}(k)\\
\vdots & \vdots & \ddots & \vdots\\
\tilde{W}_{(\tau_d+1)1}(k) & \tilde{W}_{(\tau_d+1)2}(k) & \hdots & \tilde{W}_{(\tau_d+1)(\tau_d+1)}(k)
\end{bmatrix},
\end{align*}
}\normalsize
where $\tilde{W}_{ij}(k)\in\mathbb{R}^{n\times n}$ denotes the block matrix in $\tilde{W}(k)$.
Then, $\tilde{W}(k)$ can be written by the first row block matrices in $\tilde{W}(k-1),\tilde{W}(k-2), \hdots, \tilde{W}(k-\tau_d)$ as follows:
\footnotesize{
\begin{align*}
\tilde{W}(k) = \begin{bmatrix}
\tilde{W}_{11}(k) & \tilde{W}_{12}(k) & \hdots & \tilde{W}_{1(\tau_d+1)}(k)\\
\tilde{W}_{11}(k-1) & \tilde{W}_{12}(k-1) & \hdots & \tilde{W}_{1(\tau_d+1)}(k-1)\\
\vdots & \vdots & \ddots & \vdots\\
\tilde{W}_{11}(k-\tau_d) & \tilde{W}_{12}(k-\tau_d) & \hdots & \tilde{W}_{1(\tau_d+1)}(k-\tau_d)
\end{bmatrix}.
\end{align*}
}\normalsize
\end{proposition}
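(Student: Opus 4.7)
The plan is to prove this by induction on $k$, exploiting the ``companion-like'' block structure of each $W_{\sigma_k}$ given in \eqref{eqn:W_j general}: the top block row carries the dynamics, while block rows $2,\dots,\tau_d+1$ consist of an identity block on the subdiagonal and zeros elsewhere, i.e., they act as a pure block-shift operator on the column of block rows to their right.

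First I would unfold the recursion $\tilde{W}(k) = W_{\sigma_k}\tilde{W}(k-1)$ from the definition \eqref{eqn:W_tilde}. Performing the block multiplication $W_{\sigma_k}\tilde{W}(k-1)$, the top block row of the product is
\[
\sum_{r=1}^{\tau_d+1} W_{1r}(k)\,\tilde{W}_{rj}(k-1), \qquad j=1,\dots,\tau_d+1,
\]
and, crucially, for any block row index $i\in\{2,\dots,\tau_d+1\}$ the only nonzero block of $W_{\sigma_k}$ is $\text{\underline{I}}^{n\times n}$ in column $i-1$. Hence the $i$-th block row of $\tilde{W}(k)$ simply equals the $(i-1)$-th block row of $\tilde{W}(k-1)$:
\[
\tilde{W}_{ij}(k) \;=\; \tilde{W}_{(i-1)j}(k-1), \qquad i=2,\dots,\tau_d+1, \ \ j=1,\dots,\tau_d+1.
\]

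Next I would iterate this shift identity. Applying it $i-1$ times shows $\tilde{W}_{ij}(k) = \tilde{W}_{1j}(k-(i-1))$ for $i=2,\dots,\tau_d+1$. Collecting all rows then yields precisely the block pattern asserted in the proposition, with block row $i$ of $\tilde{W}(k)$ being the top block row of $\tilde{W}(k-i+1)$. Formally, the induction is on $k$ with base case $k=0$, where $\tilde{W}(0) = W_{\sigma_0}$ already has the claimed form (the top row carries the dynamics and lower rows reproduce the top rows of earlier iterates under the convention that, for negative arguments, $\tilde{W}(-\ell)$ is identified with the corresponding block row of $W_{\sigma_0}$, i.e., the initial window of the concatenated state).

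The only delicate point, and the one I would be most careful with, is the boundary regime $k < \tau_d$, where the ``earlier'' iterates $\tilde{W}(k-\tau_d),\dots,\tilde{W}(-1)$ are not yet defined by the forward recursion. I would handle this by interpreting $\tilde{W}(-\ell)$ for $\ell>0$ as the $\ell$-block-row shift of the identity $\text{\underline{I}}^{n(\tau_d+1)\times n(\tau_d+1)}$, which is consistent with $\tilde{x}(0)$ already encoding the initial history $x(0),x(-1),\dots,x(-\tau_d)$; with this convention the inductive step above is uniform in $k$. Beyond this bookkeeping, the argument is a direct structural computation, so I do not anticipate a substantive technical obstacle.
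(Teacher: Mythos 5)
Your proposal is correct and follows essentially the same route as the paper: both rest on the observation that block rows $2,\dots,\tau_d+1$ of $W_{\sigma_k}$ act as a block shift, so the $i$-th block row of $\tilde{W}(k)=W_{\sigma_k}\tilde{W}(k-1)$ reproduces the $(i-1)$-th block row of $\tilde{W}(k-1)$, and iterating gives the first block rows of $\tilde{W}(k-1),\dots,\tilde{W}(k-\tau_d)$. The only difference is that the paper works out just the $\tau_d=1$ case and appeals to induction for the rest, whereas you carry out the general shift identity explicitly and also attend to the initial-window convention for $k<\tau_d$ — a detail the paper glosses over.
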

\vspace{0.2in}

\begin{proof}
For simplicity, we consider the case when $\tau_d = 1$. The most general case is then obtained by induction. For $\tau_d = 1$, $\tilde{W}(k)$ is given by
\small{
\begin{align*}
\tilde{W}(k) &= W_{\sigma_k}\tilde{W}(k-1)\\
&= \begin{bmatrix}
W_{11}(k) & W_{12}(k)\\
\\
\text{\large{\underline{I}}}^{n\times n} & \text{\large{\underline{0}}}^{n\times n}
\end{bmatrix}
\begin{bmatrix}
\tilde{W}_{11}(k-1) & \tilde{W}_{12}(k-1)\\
\\
\tilde{W}_{21}(k-1) & \tilde{W}_{22}(k-1)\\
\end{bmatrix}\\
&= \begin{bmatrix}
\tilde{W}_{11}(k) & \tilde{W}_{12}(k)\\
\\
\tilde{W}_{11}(k-1) & \tilde{W}_{12}(k-1)\\
\end{bmatrix},
\end{align*} 
}\normalsize
where $W_{ij}(k)\in\mathbb{R}^{n\times n}$ denotes the $i^{\text{th}}$ row and $j^{\text{th}}$ column block matrix in $W_{\sigma_k}$, and $\tilde{W}_{11}(k) = \sum_{r=1}^{2}W_{1r}(k)\tilde{W}_{r1}(k-1)$, $\tilde{W}_{12}(k) = \sum_{r=1}^{2}W_{1r}(k)\tilde{W}_{r2}(k-1)$.
\end{proof}

If the asynchronous scheme is guaranteed to be stable by the condition \eqref{eqn:stability of async.}, $\tilde{W}(k)$ arrives at the stationary form, defined by $\tilde{W}^{\star}:=\lim_{k\rightarrow\infty}\tilde{W}(k)$, which can be written from Proposition \ref{prop:W_tilde_star} as follows:

{\small{
\begin{align}
\tilde{W}^{\star} =
\begin{bmatrix}
\tilde{W}_{11}^{\star} & \tilde{W}_{12}^{\star} & \hdots & \tilde{W}_{1(\tau_d+1)}^{\star}\\
\tilde{W}_{11}^{\star} & \tilde{W}_{12}^{\star} & \hdots & \tilde{W}_{1(\tau_d+1)}^{\star}\\
\vdots & \vdots & \ddots & \vdots\\
\tilde{W}_{11}^{\star} & \tilde{W}_{12}^{\star} & \hdots & \tilde{W}_{1(\tau_d+1)}^{\star}
\end{bmatrix}.\label{eqn:invariant form of Wtilde}
\end{align}
}}\normalsize

Once $\tilde{W}(k)$ reaches the invariant measure $\tilde{W}^{\star}$, the equality $\tilde{W}^{\star} = W_{\sigma_k}\tilde{W}^{\star}$ holds for any $\sigma_k$. In other words, $\tilde{W}^{\star}$ stays invariant with respect to the left multiplication of $W_{\sigma_k}$, regardless of $\sigma_k$.

\subsection{Convergence Condition}
We present the main result that provides some conditions under which the asynchronous consensus value converges into the synchronous one.
\begin{theorem}\label{theorem:async. consensus with multi-leader}
Consider the dynamics of the asynchronous model \eqref{eqn:each state async} for a multi-agent system with fixed topology. 
For the stable asynchronous scheme satisfying \eqref{eqn:stability of async.}, the asynchronous state value converges into the synchronous one \textit{regardless of asynchrony}, if the interaction topology has $m$-rooted spanning tree.
\end{theorem}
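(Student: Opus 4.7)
My plan is to leverage the switched-system representation from Section III.A together with the invariant structure \eqref{eqn:invariant form of Wtilde}. Under the stability hypothesis \eqref{eqn:stability of async.}, Lemma \ref{lemma:prod of stochastic mat.} and Proposition \ref{prop:W_tilde_star} guarantee that the product matrix $\tilde{W}(k)$ converges to a stochastic limit $\tilde{W}^\star$ whose block rows are all copies of the first block row. Since $\tilde{W}(k+1)=W_{\sigma_k}\tilde{W}(k)$ while the limit is independent of $\sigma_k$, the left-invariance $W_{\sigma}\tilde{W}^\star=\tilde{W}^\star$ must hold for every admissible modal matrix $W_\sigma$.

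First I would extract a reduced fixed-point equation in $F$ alone. Reading off the first block row of $W_\sigma\tilde{W}^\star=\tilde{W}^\star$ and using the equality of block rows $\tilde{W}_{sr}^\star=\tilde{W}_{1r}^\star$ together with the identity $\sum_{s=1}^{\tau_d+1}W_{1s}(\sigma)=F$, each column block gives $F\tilde{W}_{1r}^\star=\tilde{W}_{1r}^\star$. Summing over $r$ and setting $M:=\sum_{r=1}^{\tau_d+1}\tilde{W}_{1r}^\star\in\mathbb{R}^{n\times n}$, I obtain $FM=M$ with row sums of $M$ equal to one. Adopting the standard convention that the initial history is constant, $x(-j)=x(0)$ for $j=0,\ldots,\tau_d$, gives $\tilde{x}(0)=\mathbf{1}_{\tau_d+1}\otimes x(0)$, so the asynchronous steady state is $Mx(0)$ while the synchronous steady state is $F^\star x(0)$. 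The theorem therefore reduces to the matrix identity $M=F^\star$.

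Next I would exploit the $m$-rooted spanning tree hypothesis. A root agent $i$ has no incoming edges in the consensus graph, so $f_{ii}=1$ and $f_{ij}=0$ for $j\neq i$; substituting this into \eqref{eqn:each state async} forces $x_{\text{async},i}(k)\equiv x_i(0)$ for every $k$, which pins the $i$-th row of $M$ to $e_i^\top$. After permuting indices so the $m$ roots come first, $F$ is block lower-triangular with an $I_m$ leader block and a follower sub-block $F_{22}$, and the top $m$ rows of $M$ form $[\,I_m\;\;0\,]$. Substituting these structures into $FM=M$ collapses the system to $M_{21}=(I-F_{22})^{-1}F_{21}$ and $M_{22}=0$, which coincides with the closed-form of $F^\star=\lim_{k\to\infty}F^k$ in the same block decomposition. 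Hence $M=F^\star$, and the asynchronous consensus value equals the synchronous one irrespective of the delay realization.

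The main obstacle I anticipate is rigorously justifying that $I-F_{22}$ is invertible, i.e., $\rho(F_{22})<1$. The $m$-rooted spanning tree guarantees every follower is reached by some root through a directed path in the consensus graph, so $F_{22}$ is substochastic with at least one strictly deficient row per irreducible component; converting this graph-theoretic fact into a strict spectral bound requires a careful Perron--Frobenius argument on the irreducible diagonal blocks of $F_{22}$, or equivalently a reduction to the stability hypothesis \eqref{eqn:stability of async.} restricted to the follower sub-block.
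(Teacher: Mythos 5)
Your proposal follows essentially the same route as the paper's proof: the left-invariance $W_{\sigma}\tilde{W}^{\star}=\tilde{W}^{\star}$, the fixed-point equation $F\tilde{W}_{11}^{\star}=\tilde{W}_{11}^{\star}$ after permuting the $m$ roots to the top so that $F$ is block lower-triangular with an identity leader block, the closed form $(I-Y)^{-1}X$ for the follower block matching $F^{\star}$, and a Perron--Frobenius argument for $\rho(Y)<1$. The invertibility of $I-Y$ that you flag as the main obstacle is handled in the paper exactly along the lines you sketch, using the stated fact that an $m$-rooted spanning tree yields exactly $m$ unit eigenvalues together with Perron--Frobenius applied to the nonnegative substochastic block $Y$.
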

\begin{proof}
For a given $m$-rooted spanning tree, the interaction topology can be reconstructed by reordering the agents' number such that $f_{ii} = 1$ and $f_{ij}=0$ for $i=1,2,\hdots, m$, $\forall j\in\{1,2,\hdots,n\}$ as follows:
\begin{align}
F = \left[ \begin{array}{cl}
\quad\text{\large\underline{I}}^{m\times m}& \text{\large\underline{0}}^{m\times (n-m)}\\
X & Y
\end{array} \right]\in\mathbb{R}^{n\times n},\label{eqn:F for m-rooted spanning tree}
\end{align}
where $X\in\mathbb{R}^{(n-m)\times m}$ and $Y\in\mathbb{R}^{(n-m)\times (n-m)}$ are nonnegative matrices satisfying each row sum of $[X\,\, Y]$ to be unity. Note that the submatrix $X$ cannot be a zero matrix, since the topology map is a spanning tree (i.e., all nodes are connected by network edge). 

For such a matrix $F$, the stationary form $F^{\star}:=\lim_{k\rightarrow\infty}F^k$ is obtained by
\begin{align*}
F^{\star}= \left[ \begin{array}{cl}
\quad\text{\large\underline{I}}^{m\times m}& \quad\text{\large\underline{0}}^{m\times (n-m)}\\
\displaystyle\sum_{k=0}^{\infty}Y^kX & \quad\displaystyle\lim_{k\rightarrow\infty}Y^k
\end{array} \right].
\end{align*}

The characteristic polynomial of $F$ is then written as $P(\lambda) = \text{det}(\lambda\, \text{\underline{I}}^{n\times n} - F)$. By the determinant property of the block matrix that $\text{det}\left(\scriptsize{\begin{matrix}A&0\\C&D\end{matrix}}\right) = \text{det}\left(A\right)\cdot\text{det}\left(D\right)$, it follows
\begin{align*}
P(\lambda) &= \text{det}\big(\lambda\, \text{\underline{I}}^{m\times m} - \text{\underline{I}}^{m\times m}\big)
\cdot\text{det}\big(\lambda\, \text{\underline{I}}^{(n-m)\times (n-m)} - Y \big)\\
& = (\lambda-1)^{m}\cdot \text{det}\big(\lambda\, \text{\underline{I}}^{(n-m)\times (n-m)} - Y \big).
\end{align*}
The eigenvalue of $F$ then satisfies $P(\lambda) = 0$, resulting in $m$ numbers of eigenvalue unity from $(\lambda-1)^{m}=0$.
Since $F$ is an $m$-rooted spanning tree that has $m$ numbers of spectral radius unity, this leads to $\text{det}\big(\text{\underline{I}}^{(n-m)\times (n-m)} - Y \big) \neq  0$. Therefore, it is guaranteed that $\rho(Y) < 1$ by the following result:

Suppose $\lambda_i$ and $v_i$, $i=1,2,\hdots, (n-m)$, respectively, is the eigenvalue and eigenvector for the matrix $\big(\text{\underline{I}}^{(n-m)\times (n-m)}-Y\big)$. Then, by the eigenvalue-eigenvector relationship, we have
\begin{align*}
\big(\text{\underline{I}}^{(n-m)\times (n-m)}-Y\big)v_i = \lambda_iv_i \Longleftrightarrow  \,Yv_i = \tilde{\lambda}_i v_i,
\end{align*}
where $\tilde{\lambda}_i := 1-\lambda_i$. Hence, it is clear that $\tilde{\lambda}_i$ is an eigenvalue for $Y$ with a corresponding eigenvector $v_i$.  Due to the fact that $\text{det}\big(\text{\underline{I}}^{(n-m)\times (n-m)} - Y \big)\neq 0$, we have $\lambda_i = 1 - \tilde{\lambda}_i \neq 0$, $\forall i$. Thus, $\tilde{\lambda}_i \neq 1$. Finally, the Perron-Frobenius theorem with the nonnegativeness of $Y$ yields the spectral radius of $Y$ being strictly less than unity, i.e., $\rho(Y) < 1$.

When $\rho(Y)<1$, it is known that $\sum_{k=0}^{\infty}Y^kX = \big(\text{\underline{I}}^{(n-m)\times (n-m)}-Y\big)^{-1}X$, which further leads to
\begin{align*}
F^{\star} = \left[ \begin{array}{cl}
\quad\text{\large\underline{I}}^{m\times m}& \quad\text{\large\underline{0}}^{m\times (n-m)}\\
(\text{\underline{I}}^{(n-m)\times (n-m)}-Y)^{-1}X & \quad \text{\large\underline{0}}^{(n-m)\times (n-m)}
\end{array} \right].
\end{align*}

Now, we consider the invariant measure of $\tilde{W}(k)\in\mathbb{R}^{n(\tau_d+1)\times n(\tau_d+1)}$. For the synchronous case, $\tilde{W}_{\text{sync.}}(k)$ is obtained by $\tilde{W}_{\text{sync.}}(k) = W_1^k$, where $W_{1}$ is the modal matrix given in \eqref{eqn:W_j}, which corresponds to the case that $\tau_d=0$ (i.e., no asynchrony). Then, the stationary form $\tilde{W}_{\text{sync.}}^{\star}$ is calculated by
\begin{align}
\tilde{W}_{\text{sync.}}^{\star} =  \lim_{k\rightarrow\infty}W_1^k= \begin{bmatrix}
F^{\star} & $\underline{0}$^{\,n\times (n-1)(\tau_d+1)}\\
\vdots & \vdots\\
F^{\star} & $\underline{0}$^{\,n\times (n-1)(\tau_d+1)}
\end{bmatrix}.\label{eqn:W_tilde_sync_star}
\end{align}

On the other hands, in the asynchronous case the equation $\tilde{W}^{\star} = W_{\sigma_k}\tilde{W}^{\star}$ and \eqref{eqn:invariant form of Wtilde} leads to
\begin{align}
\tilde{W}_{11}^{\star} &= \sum_{r=1}^{\tau_d+1}W_{1r}(k)\tilde{W}_{r1}^{\star}= \left(\sum_{r=1}^{\tau_d+1}W_{1r}(k)\right)\tilde{W}_{11}^{\star}= F\tilde{W}_{11}^{\star}\label{eqn:W_tilde_11_0}
\end{align}

By the given form of $F$, the structure of $\tilde{W}_{11}^{\star}$ becomes
\begin{align}
\tilde{W}_{11}^{\star}:= \left[ \begin{array}{cl}
\quad\text{\large\underline{I}}^{m\times m}& \quad\text{\large\underline{0}}^{m\times (n-m)}\\
R^{\star} & \quad S^{\star}
\end{array} \right]\label{eqn:W_tilde_11}
\end{align}
with $R^{\star}\in\mathbb{R}^{(n-m)\times m}$ and $S^{\star}\in\mathbb{R}^{(n-m)\times (n-m)}$.

Thus, from \eqref{eqn:F for m-rooted spanning tree}, \eqref{eqn:W_tilde_11_0}, and \eqref{eqn:W_tilde_11}, the matrix $R^{\star}$ and $S^{\star}$, respectively, satisfies
\begin{align}
&R^{\star} = X + YR^{\star} \Longleftrightarrow \big(\text{\underline{I}}^{(n-m)\times (n-m)}-Y\big)R^{\star} = X\label{eqn:(I-Y)R}
\end{align}
and
\begin{align}
&S^{\star} = YS^{\star} \Longleftrightarrow \big(\text{\underline{I}}^{(n-m)\times (n-m)}-Y\big)S^{\star} = \text{\underline{0}}^{(n-m)\times (n-m)}.\label{eqn:(I-Y)S}
\end{align}

Since $\text{det}\big(\text{\underline{I}}^{(n-m)\times (n-m)} - Y \big)\neq 0$, the matrix $\big(\text{\underline{I}}^{(n-m)\times (n-m)}-Y\big)$ in \eqref{eqn:(I-Y)R} and \eqref{eqn:(I-Y)S} is invertible and hence, 
$\tilde{W}_{11}^{\star} = F^{\star}$. 
Knowing that $\tilde{W}_{\text{sync.}}^{\star}$ is also the stochastic matrix, it is guaranteed that each row sum of $F^{\star}$ is unity from \eqref{eqn:W_tilde_sync_star}, which leads to $\tilde{W}_{1j}^{\star} = \text{\underline{0}}^{n \times n}$, for $j=2,3,\hdots, n$. 
As a result, we conclude that $\tilde{W}^{\star} = \tilde{W}_{\text{sync.}}^{\star}$.
\end{proof}

\begin{remark}
In \cite{fang2005information}, it is claimed without proof that the interaction topology having a rooted spanning tree ensures the convergence of the asynchronous consensus. This condition corresponds to a multi-agent system with a single leader case.
Theorem \ref{theorem:async. consensus with multi-leader} extends this condition to a more general one that is the existence of \textit{at least} one leader in multi-agent systems. Thus, Theorem \ref{theorem:async. consensus with multi-leader} guarantees that the asynchronous consensus value converges into the synchronous consensus value, \textit{irrespective of asynchrony}, for single-leader as well as multi-leader systems.
\end{remark}
\begin{remark}
In the multi-leader case, it is not a consensus problem anymore if initial values for multiple leaders are different. The states values of leaders remain time-invariant and hence, they never reach the consensus. Nonetheless, Theorem \ref{theorem:async. consensus with multi-leader} can be still applicable in finding solutions for multi-leader-follower systems associated with coordination problems, cooperative control problems, and even parallel fixed-point iterations, under the asynchronism.
\end{remark}
\vspace{0.05in}

To validate the proposed result, we revisit Example $1$. The only difference here is that there are two leaders in the node $1$ and $4$ as given in the following example.

\noindent Example $2$: Consider a consensus problem with a fixed interaction topology, where the corresponding stochastic matrix is given by
\vspace{-0.1in}

{\small
\begin{align*}
F = \begin{bmatrix}
1 & 0 & 0 & 0 & 0\\
0.4 & 0.3 & 0 & 0 & 0.3\\
0.1 & 0.2 & 0.2 & 0.4 & 0.1\\
0 & 0 & 0 & 1 & 0\\
0.1 & 0.5 & 0.1 & 0.2 & 0.1
\end{bmatrix}.
\end{align*}
}

Although this system has multiple leaders (node $1$ and $4$), it is a consensus problem, since the initial state values for two different leaders are identically given as $x_1(0)=x_4(0)=3$. Monte Carlo simulations are again performed for the realization of randomness in the asynchronous model. As depicted in Fig. \ref{fig:3}, the trajectories of $2$-norm value of the state vector for the asynchronous scheme converge into that for the synchronous one. While transient jitters are observed under asynchronism, all state values finally reach the consensus $x_1 = x_2 = \hdots = x_5 = 3$, which corresponds to the synchronous consensus value as well, regardless of asynchrony.
\begin{figure}
\centering
\includegraphics[scale=0.45]{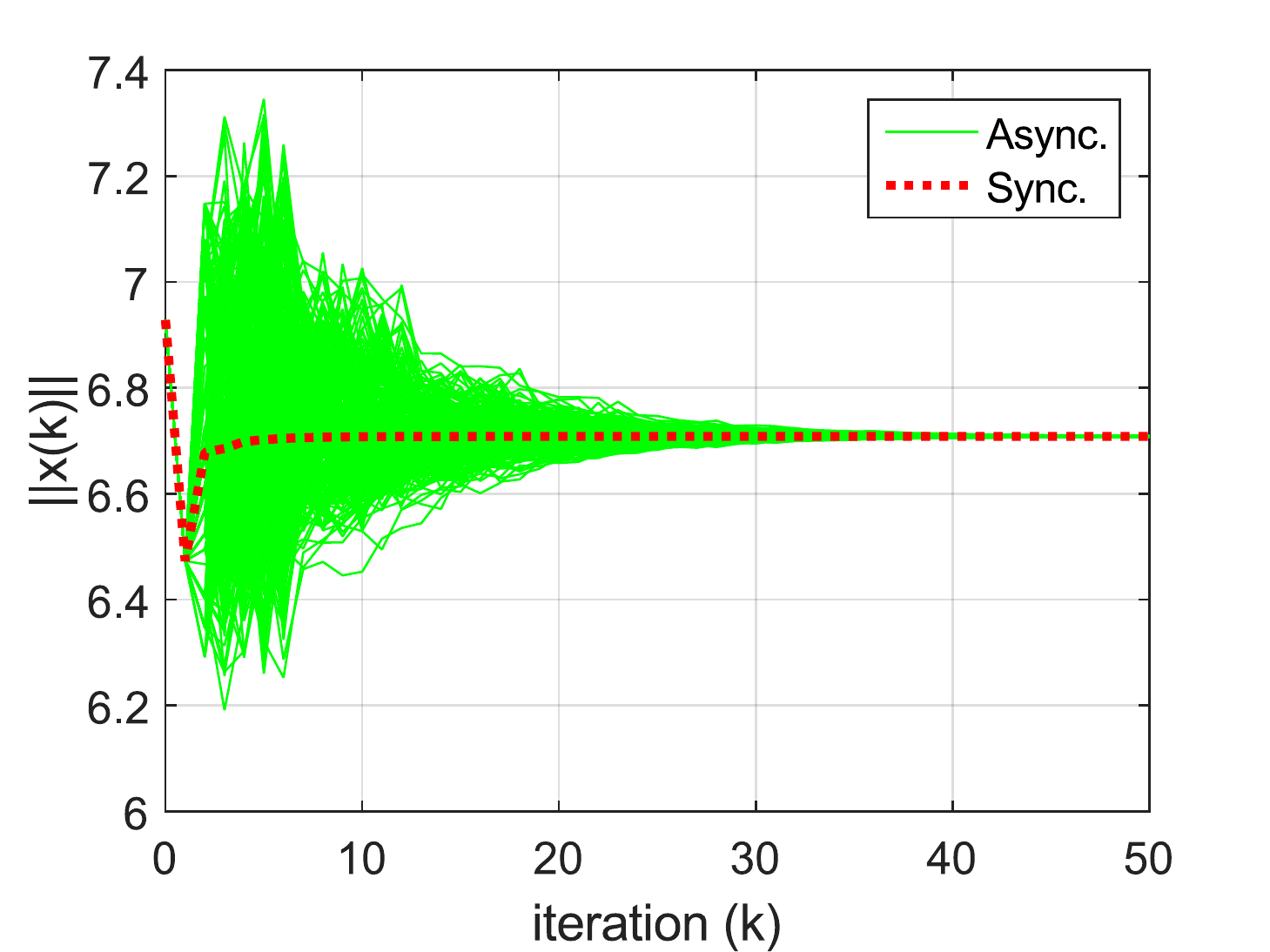}
\caption{The trajectory comparison between the asynchronous model with $300$ samples (solid lines) and the synchronous model (dotted line) for Example $2$}\label{fig:3}
\end{figure}


\section{Concluding Remarks and Future Works}
This paper provides a sufficient condition for the consistency between the synchronous and asynchronous consensus values in multi-agent systems. We proved that the asynchronous consensus value converges to the synchronous value, irrespective of randomness in the asynchronous communication, if \textit{at least} one leader exists in the multi-agent system. We conjecture that this is also a necessary condition for the consistency. In this work, the interaction topology is assumed to be fixed, whereas from the practical perspective, a time-varying topology should be considered, which will be the focus of our future work. 

\bibliographystyle{ieeetr}
\bibliography{ref_CDC2016}

\begin{thebibliography}{10}

\bibitem{porfiri2007tracking}
M.~Porfiri, D.~G. Roberson, and D.~J. Stilwell, ``Tracking and formation
  control of multiple autonomous agents: A two-level consensus approach,'' {\em
  Automatica}, vol.~43, no.~8, pp.~1318--1328, 2007.

\bibitem{listmann2009consensus}
K.~D. Listmann, M.~V. Masalawala, and J.~Adamy, ``Consensus for formation
  control of nonholonomic mobile robots,'' in {\em Robotics and Automation,
  2009. ICRA'09. IEEE International Conference on}, pp.~3886--3891, IEEE, 2009.

\bibitem{blondel2005convergence}
V.~Blondel, J.~M. Hendrickx, A.~Olshevsky, J.~Tsitsiklis, {\em et~al.},
  ``Convergence in multiagent coordination, consensus, and flocking,'' in {\em
  IEEE Conference on Decision and Control}, vol.~44, p.~2996, IEEE; 1998, 2005.

\bibitem{yu2010distributed}
W.~Yu, G.~Chen, and M.~Cao, ``Distributed leader--follower flocking control for
  multi-agent dynamical systems with time-varying velocities,'' {\em Systems \&
  Control Letters}, vol.~59, no.~9, pp.~543--552, 2010.

\bibitem{xiao2004fast}
L.~Xiao and S.~Boyd, ``Fast linear iterations for distributed averaging,'' {\em
  Systems \& Control Letters}, vol.~53, no.~1, pp.~65--78, 2004.

\bibitem{nedic2009distributed}
A.~Nedi{\'c}, A.~Olshevsky, A.~Ozdaglar, and J.~N. Tsitsiklis, ``On distributed
  averaging algorithms and quantization effects,'' {\em Automatic Control, IEEE
  Transactions on}, vol.~54, no.~11, pp.~2506--2517, 2009.

\bibitem{mehyar2005distributed}
M.~Mehyar, D.~Spanos, J.~Pongsajapan, S.~H. Low, and R.~M. Murray,
  ``Distributed averaging on asynchronous communication networks,'' in {\em
  Decision and Control, 2005 and 2005 European Control Conference. CDC-ECC'05.
  44th IEEE Conference on}, pp.~7446--7451, IEEE, 2005.

\bibitem{olshevsky2009convergence}
A.~Olshevsky and J.~N. Tsitsiklis, ``Convergence speed in distributed consensus
  and averaging,'' {\em SIAM Journal on Control and Optimization}, vol.~48,
  no.~1, pp.~33--55, 2009.

\bibitem{ren2007consensus}
W.~Ren, ``Consensus strategies for cooperative control of vehicle formations,''
  {\em Control Theory \& Applications, IET}, vol.~1, no.~2, pp.~505--512, 2007.

\bibitem{ren2008distributed}
W.~Ren and R.~W. Beard, {\em Distributed consensus in multi-vehicle cooperative
  control}.
\newblock Springer, 2008.

\bibitem{papachristodoulou2010effects}
A.~Papachristodoulou, A.~Jadbabaie, and U.~Munz, ``Effects of delay in
  multi-agent consensus and oscillator synchronization,'' {\em Automatic
  Control, IEEE Transactions on}, vol.~55, no.~6, pp.~1471--1477, 2010.

\bibitem{xiao2005scheme}
L.~Xiao, S.~Boyd, and S.~Lall, ``A scheme for robust distributed sensor fusion
  based on average consensus,'' in {\em Information Processing in Sensor
  Networks, 2005. IPSN 2005. Fourth International Symposium on}, pp.~63--70,
  IEEE, 2005.

\bibitem{olfati2005consensus}
R.~Olfati-Saber and J.~S. Shamma, ``Consensus filters for sensor networks and
  distributed sensor fusion,'' in {\em Decision and Control, 2005 and 2005
  European Control Conference. CDC-ECC'05. 44th IEEE Conference on},
  pp.~6698--6703, IEEE, 2005.

\bibitem{kar2009distributed}
S.~Kar and J.~M. Moura, ``Distributed consensus algorithms in sensor networks
  with imperfect communication: Link failures and channel noise,'' {\em Signal
  Processing, IEEE Transactions on}, vol.~57, no.~1, pp.~355--369, 2009.

\bibitem{zhang2014asynchronous}
R.~Zhang and J.~Kwok, ``Asynchronous distributed admm for consensus
  optimization,'' in {\em Proceedings of the 31st International Conference on
  Machine Learning (ICML-14)}, pp.~1701--1709, 2014.

\bibitem{ren2004consensus}
W.~Ren and R.~W. Beard, ``Consensus of information under dynamically changing
  interaction topologies,'' in {\em American Control Conference, 2004.
  Proceedings of the 2004}, vol.~6, pp.~4939--4944, IEEE, 2004.

\bibitem{bertsekas1989parallel}
D.~P. Bertsekas and J.~N. Tsitsiklis, {\em Parallel and distributed
  computation: numerical methods}, vol.~23.
\newblock Prentice hall Englewood Cliffs, NJ, 1989.

\bibitem{szyld1998mystery}
D.~B. Szyld, ``The mystery of asynchronous iterations convergence when the
  spectral radius is one,'' {\em Report 98}, vol.~102, 1998.

\bibitem{fang2005information}
L.~Fang and P.~J. Antsaklis, ``Information consensus of asynchronous
  discrete-time multi-agent systems,'' in {\em American Control Conference,
  2005. Proceedings of the 2005}, pp.~1883--1888, IEEE, 2005.

\bibitem{gao2013asynchronous}
Y.~Gao, M.~Zuo, T.~Jiang, J.~Du, and J.~Ma, ``Asynchronous consensus of
  multiple second-order agents with partial state information,'' {\em
  International Journal of Systems Science}, vol.~44, no.~5, pp.~966--977,
  2013.

\bibitem{xiao2008asynchronous}
F.~Xiao and L.~Wang, ``Asynchronous consensus in continuous-time multi-agent
  systems with switching topology and time-varying delays,'' {\em Automatic
  Control, IEEE Transactions on}, vol.~53, no.~8, pp.~1804--1816, 2008.

\bibitem{lee2015switched}
K.~Lee, R.~Bhattacharya, and V.~Gupta, ``A switched dynamical system framework
  for analysis of massively parallel asynchronous numerical algorithms,'' in
  {\em American Control Conference (ACC), 2015}, pp.~1095--1100, IEEE, 2015.

\bibitem{lee2015asynchronous}
K.~Lee and R.~Bhattacharya, ``Asynchronous parallel computing algorithm
  implemented in 1d heat equation with cuda,'' {\em arXiv preprint
  arXiv:1510.08982}, 2015.

\bibitem{lee2015performance}
K.~Lee, A.~Halder, and R.~Bhattacharya, ``Performance and robustness analysis
  of stochastic jump linear systems using wasserstein metric,'' {\em
  Automatica}, vol.~51, pp.~341--347, 2015.

\end{thebibliography}

\end{document}